\documentclass[a4paper]{article}

\usepackage[english]{babel}
\usepackage[utf8x]{inputenc}
\usepackage[T1]{fontenc}
\usepackage{amsmath,amsxtra,amssymb,amsfonts,amsthm,amscd,array}
\usepackage{graphicx,latexsym,wrapfig}
\DeclareMathOperator*{\argmax}{arg\,max}

\usepackage{color}
\usepackage{algorithm}
\usepackage{algorithmic}
\usepackage{mathtools}
\usepackage{float}
\usepackage{booktabs}
\usepackage{fancyhdr}
\usepackage{lscape}
\usepackage{dsfont}
\usepackage[colorlinks=true, allcolors=blue]{hyperref}
\usepackage[authoryear,sort]{natbib}
\usepackage{appendix}
\usepackage{tikz}
\usetikzlibrary{matrix,chains,positioning,decorations.pathreplacing,arrows}

\usepackage{xparse}
\usepackage{breakcites}

\newtheorem{theorem}{Theorem}
\newtheorem*{remark}{Remark}

\usepackage[a4paper,top=3cm,bottom=2cm,left=3cm,right=3cm,marginparwidth=1.75cm]{geometry}


\theoremstyle{definition}

\newtheorem{proposition}{Proposition}[section]
\newtheorem{lemma}{Lemma}[section]

\DeclarePairedDelimiterX{\expectarg}[1]{[}{]}{%
  \ifnum\currentgrouptype=16 \else\begingroup\fi
  \activatebar#1
  \ifnum\currentgrouptype=16 \else\endgroup\fi
}

\NewDocumentCommand\ee{moo} 
        {%
            \mathbb{E}\IfValueT{#3}{^{#3}}
            \left[#1
                \IfValueT{#2}{\middle|{#2}}
            \right]%
        }

\newcommand{\RR}{\mathbb{R}} \newcommand{\NN}{\mathbb{N}} 
\newcommand{\PP}{\mathbb{P}} 

\newcommand{\EE}{\mathbb{E}}


\title{Solving optimal stopping problems with Deep Q-learning}
\author{John Ery\footnote{RiskLab, Department of Mathematics, ETH Zurich, john.ery@math.ethz.ch} 
\and Loris Michel
\footnote{Seminar f\"ur Statistik, Department of Mathematics, ETH Zurich, michel@stat.math.ethz.ch}
}

\begin{document}
\maketitle

\begin{abstract}
We propose a reinforcement learning (RL) approach to model optimal exercise strategies for option-type products. 
We pursue the RL avenue in order to learn the optimal action-value function of the underlying stopping problem. In addition to retrieving the optimal Q-function at any time step, one can also price the contract at inception. We first discuss the standard setting with one exercise right, and later extend this framework to the case of multiple stopping opportunities in the presence of constraints. 
We propose to approximate the Q-function with a deep neural network, which does not require the specification of basis functions as in the least-squares Monte Carlo framework 
and is scalable to higher dimensions. We derive a lower bound on the option price obtained from the trained neural network and an upper bound from the dual formulation of the stopping problem, which can also be expressed in terms of the Q-function.
Our methodology is illustrated with examples covering the pricing of swing options. 
\end{abstract}

\section{Introduction}\label{sec:intro}
Reinforcement learning (RL) in its most general form deals with agents living in some environment and aiming at maximizing a given reward function. Alongside supervised and unsupervised learning, it is often considered as the third family of models in the machine learning literature. It encompasses a wide class of algorithms that have gained popularity in the context of building intelligent machines that can outperform masters in ancestral board games such as Go or chess, see e.g. \cite{silver16}; \cite{silver17}. These models are very skilled when it comes to learning the rules of a certain game, starting from little or no prior knowledge at all, and progressively developing winning strategies.
Recent research, see e.g. \cite{deepmind}, \cite{doubleQ}, \cite{duelingQ}, has considered integrating deep learning techniques in the framework of reinforcement learning in order to model complex unstructured environments. Deep reinforcement learning can hence leverage the ability of deep neural networks to uncover hidden structure from very complex functionals and the power of reinforcement techniques to take complex actions.

Optimal stopping problems from mathematical finance 
naturally fit into the reinforcement learning framework. Our work is motivated by the pricing of swing options which appear in energy markets (oil, natural gas, electricity) to hedge against futures price fluctuations, see e.g. \cite{meinshausen}, \cite{bender15}, and more recently \cite{daluiso20}.
Intuitively, when behaving optimally, investors holding these options are trying to maximize their reward by following some optimal sequence of decisions, which in the case of swing options consists in purchasing a certain amount of electricity or natural gas at multiple exercise times.

The stopping problems we will consider belong to the category of Markov decision processes (MDP). We refer the reader to \cite{puterman} or \cite{bertsekas} for good textbook references on this topic.  When the size of the MDP becomes large or when the MDP is not fully known (model-free learning), alternatives to standard dynamic programming techniques must be sought. Reinforcement learning can efficiently tackle these issues and can be transposed to our problem of determining optimal stopping strategies.

Previous work exists on the connections between optimal stopping problems in mathematical finance and reinforcement learning. For example, the common problem of learning optimal exercise policies for American options has been tackled in \cite{li} using reinforcement learning techniques. They implement two algorithms, namely least-squares policy iteration (LSPI), see \cite{lagoudakis}, and fitted Q-iteration (FQI), see \cite{vanroy}, and compare their performance to a benchmark provided by the least-squares Monte Carlo (LSMC) approach of see \cite{longstaff}. It is shown empirically that strategies uncovered by both these algorithms provide larger payoffs than LSMC. 
\cite{kohler08} model the Snell envelope of the underlying optimal stopping problem with a neural network. More recently, \cite{dos} derive optimal stopping rules from Monte Carlo samples at each time step using deep neural networks.
An alternative approach developed in \cite{becker20} considers the approximation of the continuation values using deep neural networks. This method also produces a dynamic hedging strategy based on the approximated continuation values. A similar approach with different activation functions is presented in \cite{lapeyre} alongside a convergence result for the pricing algorithm, whereas the method employed in \cite{chen20} is based on BSDE's.

Our work aims at casting the optimal stopping decision into a unifying reinforcement learning framework through the modeling of the action-value function of the problem. One can then leverage reinforcement learning algorithms involving neural networks that learn the optimal action-value function at any time step. We illustrate this methodology by presenting examples from mathematical finance. In particular, we will focus on high-dimensional swing options where the action taking is more complex, and where deep neural networks are particularly powerful due to their approximation capabilities.

The remainder of the paper is structured as follows. In Section \ref{sec:methodo}, we introduce the necessary mathematical tools from reinforcement leaning, present an estimation approach of the Q-function using neural networks, and discuss the derivation of the lower and upper bounds on the option price. In Section \ref{sec:multiple} we explain the multiple stopping problem with waiting period constraint between two consecutive exercise times, again with the derivation of a lower bound and an upper bound on the option price at inception.
We display numerical results for swing options in Section \ref{sec:results}, and conclude in Section \ref{sec:conclu}.

\section{Theory and methodology}\label{sec:methodo}
In this section we present the mathematical building blocks and the reinforcement learning machinery, leading to the formulation of the stopping problems under consideration. 
\subsection{Markov decision processes and action-value function}\label{sec:MDP}
As discussed in the introduction, the problems we will consider in the sequel can be embedded into the framework of the well-studied Markov decision processes (MDPs), see \cite{sutton}.
A Markov decision process is defined as a tuple $\left(\mathcal{S},\mathcal{A},p,R,\gamma\right),$ where 
\begin{itemize}
\item $\mathcal{S}$ is the set of states;
\item $\mathcal{A}$ is the set of actions the agent can take;
\item $p$ is the transition probability kernel, where $p(\cdot|s,a)$ is the probability of future states given that the current state is $s$ and that action $a$ is taken;
\item $R$ is a reward function, where $R(s,a)$ denotes the reward obtained when moving from state $s$ under action $a$ (note here that different definitions exist in the literature);
\item $\gamma\in(0,1)$ is a discount factor which expresses preference towards short-term rewards (in the present work $\gamma=1$ as we consider already discounted rewards).
\end{itemize}
A policy $\pi$ is then a rule for selecting actions based on the last visited state. More specifically, $\pi(s,a)$ denotes the probability of taking action $a$ in state $s$ under policy $\pi.$
The conventional task is to maximize the total (discounted) expected reward over policies, and can be expressed as $\EE_{\pi}\left[\sum_{t=0}^\infty \gamma^t R_t\right].$ A policy which maximizes this quantity is called an optimal policy. 
Given a starting state $s_0,$ an initial action $a_0,$ one can define the \emph{action-value function}, also called \emph{Q-function}:
\begin{equation}\label{eq:Qfunc}
Q^{\pi}(s,a)=\EE_{\pi}\left[\sum_{t=0}^\infty \gamma^t R_t\Big \lvert s_0=s,\mbox{ }a_0=a\right],
\end{equation}
where $R_t=R(s_t,a_t),$
for a sequence of state-action pairs $(s_t,a_t)_{t\geq 0}\sim \pi.$
The optimal policy $\pi^*$ satisfies
\begin{equation}
Q^{*}(s,a)=\sup_{\pi}Q^{\pi}(s,a),
\label{eq:Qopt}
\end{equation} 
where we write $Q^{*}$ for $Q^{\pi^*}$. 
In other words, the optimal Q-function measures how "good" or "rewarding" it is to choose action $a$ while in state $s,$ by following optimal decisions.
We will consider problems with finite time horizon $T>0,$ and we accordingly set $R_t=0$ for all $t>T.$

\subsection{Single stopping problems as Markov decision processes}
We consider the same stopping problem as in \cite{dos} and \cite{becker20}, namely an American-style option defined on a finite time grid $0 = t_0 < t_1 < \ldots < t_N = T$. The discounted payoff process $\left(G_n\right)_{n=0}^N$ is assumed to be square-integrable and takes the form $G_n=g\left(n, X_{t_n}\right)$ for a measurable function $g: \{0, 1, \ldots, N\} \times \mathbb{R}^{d} \rightarrow \mathbb{R}$ and a $d$-dimensional $\mathbb{F}$-Markovian process $\left(X_{t_n}\right)_{n=0}^N$ defined on a filtered probability space $\left(\Omega, \mathcal{F}, \mathbb{F}=\left(\mathcal{F}_n\right)_{n=0}^N, \mathbb{P}\right)$.
Let $E\subset\RR^d$ denote the space in which the underlying process lives.
We assume that $X_0$ is deterministic and that $\PP$ is the risk-neutral probability measure.
The value of the option at time $0$ is given by 
\begin{equation}V_0=\sup_{\tau \in \mathcal{T}} \EE\left[g(\tau, X_{\tau})\right],
\label{eq:stop}
\end{equation}
where $\mathcal{T}$ denotes all stopping times $\tau: \Omega \rightarrow \{t_0,t_1,\ldots,t_N\}$. This problem is essentially a Markov decision process with state space $\mathcal{S} = \{0,1,\ldots,N\}\times\mathbb{R}^{d}\times  \{0,1\}$, action space $\mathcal{A} = \{0,1\}$ (where we follow the convention $a=0$ for continuing and $a=1$ for stopping), reward function\footnote{When exercizing (taking action $a=1$), we implicitly move to the absorbing state, i.e. the last component of the state space becomes 1.}  \begin{equation*}
R\left(\left(n, X_{t_n}\right), a\right) = \begin{cases} g(n, X_{t_n}), & \text{if } a = 1,\\ 0, &\text{if } a = 0,\end{cases}
\end{equation*}
for $n=0,\ldots,N,$ and transition kernel $p$ driven by the dynamics of the $\mathbb{F}$-Markovian process $(X_{t_n})_{n=0}^N$.
The state space includes time, the $d$-dimensional Markovian process and an additional (absorbing) state which at each time step captures the event of exercise or no exercise. More precisely, we jump to this absorbing state when we have exercised. In the multiple stopping case which we discuss in Section \ref{sec:multiple}, we jump to this absorbing state once we have used the last exercise right. In both single and multiple stopping frameworks, once this absorbing state has been reached at a random time 
$\tau: \Omega \rightarrow \{t_0,t_1,\ldots,t_N\}$, we set all rewards and Q-values to 0 for $t>\tau.$ 
The associated \emph{Snell envelope} process $\left(Z_n\right)_{n=0}^N$ of the stopping problem in (\ref{eq:stop}) is defined recursively by 
\begin{equation}
Z_n =\begin{cases} 
g\left(N, X_{t_N}\right), &\text{ if } n = N,\\ \max\left\{g\left(n, X_{t_n}\right), \EE\left[Z_{n+1} \mid \mathcal{F}_n\right]\right\}, &\text{ if } 0 \leq n \leq N-1.
\end{cases}\label{eq:snell}
\end{equation}
It is well known that the Snell envelope provides an optimal stopping time solving (\ref{eq:stop}) as stated in the following result.\footnote{Note that in particular $Z_{0}=V_{0}$.} A standard proof for the latter can be found in \cite{shreve}.

\begin{proposition}\label{optimal stop}
The stopping time $\tau^{*}$ defined by
\begin{equation*}
\tau^{*} = \inf\{n: Z_n = g\left(n, X_{t_n}\right)\}
\end{equation*}for the Snell envelope $\left(Z_n\right)_{n=0}^N$ given in (\ref{eq:snell}), is optimal for the problem (\ref{eq:stop}).
\end{proposition}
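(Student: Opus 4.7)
My plan is to prove optimality by establishing two classical properties of the Snell envelope: it is a supermartingale dominating the payoff process, and when stopped at $\tau^{*}$ it becomes a true martingale. Combining these, $Z_{0}$ both upper-bounds $\EE[g(\tau,X_{\tau})]$ for every $\tau\in\mathcal{T}$ and is exactly attained by $\tau=\tau^{*}$.

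First I would check that $\tau^{*}$ is a well-defined stopping time in $\mathcal{T}$: the event $\{\tau^{*}=n\}$ is $\mathcal{F}_{n}$-measurable because $Z_{n}$ and $g(n,X_{t_{n}})$ are, and $\tau^{*}\leq N$ a.s.\ since the terminal value of the recursion in (\ref{eq:snell}) forces $Z_{N}=g(N,X_{t_{N}})$. Integrability of $Z_{n}$ follows inductively from the square-integrability assumption on $G_{n}$, so all conditional expectations below are well defined.

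Next I would verify that $(Z_{n})$ is an $\mathbb{F}$-supermartingale dominating the payoff: both properties are immediate from (\ref{eq:snell}), since $Z_{n}\geq \EE[Z_{n+1}\mid\mathcal{F}_{n}]$ and $Z_{n}\geq g(n,X_{t_{n}})$ by the $\max$. In particular, for any $\tau\in\mathcal{T}$ the optional sampling theorem (available in this discrete bounded-horizon setting without further hypotheses) yields
\begin{equation*}
Z_{0}\;\geq\;\EE[Z_{\tau}]\;\geq\;\EE[g(\tau,X_{\tau})],
\end{equation*}
so $Z_{0}$ is an upper bound for $V_{0}$.

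The crux of the argument is to show equality is attained at $\tau^{*}$. On the event $\{\tau^{*}>n\}$ we have $Z_{n}>g(n,X_{t_{n}})$ by definition of $\tau^{*}$, so the $\max$ in (\ref{eq:snell}) is realized by the continuation value, giving $Z_{n}=\EE[Z_{n+1}\mid\mathcal{F}_{n}]$. Using that $\{\tau^{*}>n\}\in\mathcal{F}_{n}$, I would then show that the stopped process $Z^{\tau^{*}}_{n}:=Z_{n\wedge\tau^{*}}$ satisfies
\begin{equation*}
\EE\!\left[Z^{\tau^{*}}_{n+1}\mid\mathcal{F}_{n}\right]
= Z_{n\wedge\tau^{*}}\bfone_{\{\tau^{*}\leq n\}} + \EE[Z_{n+1}\mid\mathcal{F}_{n}]\bfone_{\{\tau^{*}>n\}}
= Z^{\tau^{*}}_{n},
\end{equation*}
so $Z^{\tau^{*}}$ is a martingale. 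Taking expectations and using $Z_{\tau^{*}}=g(\tau^{*},X_{\tau^{*}})$ by the defining infimum yields $Z_{0}=\EE[g(\tau^{*},X_{\tau^{*}})]$, which combined with the upper bound shows $\tau^{*}$ attains the supremum in (\ref{eq:stop}) and that $Z_{0}=V_{0}$.

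The main technical point to watch is the martingale identification on $\{\tau^{*}>n\}$: one has to be careful that $Z_{n}>g(n,X_{t_{n}})$ on this event, rather than only $Z_{n}\geq g(n,X_{t_{n}})$, so that the recursion forces the equality with the continuation value. Everything else is bookkeeping with the finite horizon, for which no additional uniform integrability condition is needed.
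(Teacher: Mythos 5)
Your proof is correct and is precisely the standard Snell-envelope argument (supermartingale domination plus martingality of the stopped process on $\{\tau^{*}>n\}$) that the paper itself does not reproduce but delegates to the cited reference (Shreve). All the delicate points — measurability and boundedness of $\tau^{*}$, the strict inequality $Z_{n}>g(n,X_{t_{n}})$ on $\{\tau^{*}>n\}$ forcing $Z_{n}=\EE[Z_{n+1}\mid\mathcal{F}_{n}]$, and the absence of any uniform integrability issue in the finite-horizon discrete setting — are handled correctly.
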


\noindent
Various modeling approaches have been proposed to estimate the option value in (\ref{eq:stop}). \cite{kohler08} propose to model directly the Snell envelope, \cite{dos} take the approach of modeling the optimal stopping times. More recently, \cite{becker20} model the continuation values of the stopping problem. In this work, we rather propose to model the optimal action-value function of the problem $Q^{*}\left(\left(n, X_{t_n}\right), a\right)$ for all $n=0,\ldots,N,$ and $a \in \{0,1\}$ (where $a$ represents the stopping decision) given by 
\begin{equation}
Q^{*}\left(\left(n, X_{t_n}\right), a\right) =\begin{cases}  g\left(n,X_{t_n}\right), &\text{ if } a = 1,\\ \mathbb{E}\left[Z_{n+1} \mid \mathcal{F}_n\right], &\text{ if } a = 0.
\end{cases}\label{eq:Qstar}
\end{equation}
\sloppy
According to Proposition \ref{optimal stop}, through the knowledge of the optimal action-value function $Q^{*}\left(\left(n, X_{t_n}\right), a\right),$ we can recover the optimal stopping time $\tau^{*}$. Indeed, it turns out that the optimal decision functions $f_{0}, \ldots, f_N$ in \cite{dos} can be expressed in the action-value function framework through
\begin{equation*}
f_{n}\left(X_{t_n}\right) = \mathds{1}\left\{\argmax_{a \in \{0,1\}}Q^{*}\left(\left(n, X_{t_n}\right), a\right) = 1\right\}, \mbox{ } \forall  \mbox{ } n=0,\ldots, N,
\end{equation*}
where $\mathds{1}\{\cdot\}$ denotes the indicator function. Moreover, one can express the Snell envelope (estimated in \cite{kohler08}) as $Z_n = \max\left\{ Q^{*}\left(\left(n, X_{t_n}\right), 0\right), Q^{*}\left(\left(n, X_{t_n}\right), 1\right)\right\},$ and the continuation value modeled in \cite{becker20} can be reformulated in our setting as $C_n = Q^{*}\left(\left(n, X_{t_n}\right), 0\right)$.
As a by-product, one can price financial products such as swing options 
by considering
$\max\{Q^{*}(0,X_0,1),Q^{*}(0,X_0,0)\}.$ 

In this perspective, our modeling approach is very similar to previous studies but differs in the reinforcement learning machinery employed. Indeed, modeling the action-value function and optimizing it is a common and natural approach known under the name of \emph{Q-learning} in the reinforcement learning literature. We introduce it in the next section. 
\subsection{Q-learning as estimation method}
In contrast to policy or value iteration, Q-learning methods, see e.g. \cite{watkins89} and \cite{watkins92}, estimate directly the optimal action-value function. They are model-free and can learn optimal strategies with no prior knowledge of the state transitions and the rewards. In this paradigm, an agent interacts with the environment (exploration step) and learns from past actions (exploitation step) to derive the optimal strategy.

One way to model the action-value function is by using deep neural networks. This approach is referred to under the name deep Q-learning in the reinforcement learning literature. In this setup, the optimal action-value function $Q^{*}$ is modeled with a neural network $Q\left(s,a;\theta\right)$ often called deep Q-network (DQN), where $\theta$ is a vector of parameters corresponding to the network architecture. 
However, reinforcement learning can be highly unstable or even potentially diverge due to the introduction of neural networks in the approximation the Q-function.
To tackle these issues, a variant to the original Q-learning method has been developed in \cite{mnih15}. It relies on two main concepts.
The first is called experience replay and allows to remove correlations in the sequence of observations.
In practice this is done by generating a large sample of experiences which we denote as vectors $e_t=(s_t,a_t,r_t,s_{t+1})$ at each time $t,$ and that we store in a dataset $D.$ We note that once we have reached the absorbing state, we start a new episode or sequence of observations by resetting the MDP to the initial state $s_0$.
Furthermore, we allow the agent to explore new unseen states according to a so-called $\varepsilon$-greedy strategy, see \cite{sutton}, meaning that with probability $\varepsilon$ we take a random action and with probability $(1-\varepsilon)$ we take the action maximizing the Q-value. Typically one reduces the value of $\varepsilon$ according to a linear schedule as the training iterations increase.

During the training phase, we then perform updates to the Q-values by sampling mini-batches uniformly at random from this dataset $(s,a,r,s')\sim \mathcal{U}(D)$ and minimizing over $\theta$
the following loss function
\begin{equation}\label{eq:lossDQN}
L(\theta)=
\EE_{(s,a,r,s')\sim \mathcal{U}(D)}\left[
\left(R(s,a)+\gamma \max_{a'} Q\left(s',a';\theta\right)-Q\left(s,a;\theta\right)\right)^2\right].
\end{equation}
However there might still be some correlations between the Q-values $Q(s,a;\theta)$ and the so-called target values $R(s,a)+\gamma \max_{a'} Q\left(s',a';\theta\right).$ The second improvement brought forward in \cite{mnih15} consists in
updating the network parameters for the target values only with a regular frequency and not after each iteration. This is called parameter freezing and translates into minimizing over $\theta$ the modified loss function
\begin{equation}\label{eq:lossDQNfreezing}
L(\theta)=
\EE_{(s,a,r,s')\sim \mathcal{U}(D)}\left[
\left(R(s,a)+\gamma \max_{a'} Q\left(s',a';\theta^*\right)-Q\left(s,a;\theta\right)\right)^2\right],
\end{equation}
where the target network parameters $\theta^*$ are only updated with the DQN parameters $\theta$ every $T^*>0$ steps, and are held constant between individual updates.

An alternative network specification would be to take only the state as input $Q\left(s;\theta\right)$ and update the Q-values for each action, see the implementation in \cite{deepmind}. 
Network architectures such as double deep Q-networks, see \cite{doubleQ}, dueling deep Q-networks, see \cite{duelingQ}, and combinations thereof, see \cite{rainbow} have been developed to improve the training performance even further. However the implementation of these algorithms is out of the scope of our presentation.

\subsection{Inference and confidence intervals}
In the same spirit as \cite{dos} and \cite{becker20}, we compute lower and upper bounds on the option price in (\ref{eq:stop}), the confidence interval resulting from the central limit theorem, as well as a point estimate for the optimal value $V_0.$ In the sequel, for ease of notation, we will use $X_{t_n}=X_n,$ for $n=0,\ldots,N.$

\subsubsection{Lower bound}\label{sec:lb}
We store the parameters learned through the training of the deep neural network on an experience replay dataset with simulations $\left(X_n^k\right)_{n=0}^N$ for $k=1,\ldots,K.$ 
We denote as $\hat{\theta}\in\Theta$ the vector of network parameters where $\Theta\subset \RR^q,$ 
$q>0$ denotes the dimension of the parameter space and $Q\left(s,a;\hat{\theta}\right)$ corresponds to the calibrated network. We then generate new simulations of the state space process $\left(X_n^k\right)_{n=0}^N$, independent from those used for training, for $k=K+1,\ldots,K+K_L.$ The independence is necessary to achieve unbiasedness of the estimates. The Monte Carlo average 
\begin{equation*}
\hat{L}=\frac{1}{K_L}\sum_{k=K+1}^{K+K_L}g\left(\tau_{k},X^{k}_{\tau_{k}}\right)
\end{equation*}
where 
$\tau_k = \inf \left\{0 \leq n \leq N: Q\left(\left(n,X_n^{k}\right),1;\hat{\theta}\right) > Q\left(\left(n,X_n^{k}\right),0;\hat{\theta}\right) \right\}$ yields a lower bound for the optimal value $V_0.$ Since the optimal strategies are not unique, we follow the convention of taking the largest optimal stopping rule which yields a strict inequality.

\subsubsection{Upper bound}\label{sec:upper}
The derivation of the upper bound is based on the Doob-Meyer decomposition of the supermartingale given by the Snell envelope, see \cite{shreve}.
The Snell envelope $(Z_n)_{n=0}^N$ of the discounted payoff process $(G_n)_{n=0}^N$ can be decomposed as
\begin{equation*}
Z_n=Z_0+M_n^Z-A_n^Z,    
\end{equation*}
where $M^Z$ is the $(\mathcal{F}_n)$-martingale given by 
\begin{equation*}
M_0^Z=0 \textnormal{  and  } M_n^Z-M_{n-1}^Z=Z_n-\ee{Z_n\mid \mathcal{F}_{n-1}},\quad n=1,\ldots,N,   
\end{equation*}
and $A^Z$ is the non-decreasing $(\mathcal{F}_n)$-predictable process given by 
\begin{equation*}
A_0^Z=0 \textnormal{  and  } A_n^Z-A_{n-1}^Z=Z_{n-1}-\ee{Z_n\mid \mathcal{F}_{n-1}},\quad n=1,\ldots,N.  
\end{equation*}
From Proposition 7 in \cite{dos}, given a sequence $(\varepsilon_n)_{n=0}^N$ of integrable random variables in $(\Omega,\mathcal{F},\mathbb{P})$ such that $\mathbb{E}[\varepsilon_n\mid\mathcal{F}_n]=0$ for all $n=0,\ldots,N,$
one has
\begin{equation*}
V_0\leq \ee{\max_{0\leq n \leq N} \left(g(n,X_n)-M_n-\varepsilon_n\right)},
\end{equation*}
for every $(\mathcal{F}_n)$-martingale $\left(M_n\right)_{n=0}^N$ starting from 0.

This upper bound is tight if $M=M^Z$ and $\varepsilon\equiv 0.$ We can then use the optimal action-value function learned via the deep neural network to construct a martingale close to $M^Z.$
We now adapt the approach presented in \cite{dos} to the expression of the martingale component of the Snell envelope. Indeed, the martingale differences $\Delta M_n$ from Subsection 3.2 in \cite{dos} can be written in terms of the optimal action-value function:
\begin{equation*}
\Delta M_n=M_n-M_{n-1}=Q^{*}((n,X_n),a)-Q^{*}((n-1,X_{n-1}),0),
\end{equation*}
since the continuation value at time $n-1$ is given by evaluating the optimal action-value function at action $a=0$ (continuing).
Given the definition of the optimal action-value function at (\ref{eq:Qstar}), one can rewrite the martingale differences as
\begin{equation}
\Delta M_n=g(n,X_n)\mathds{1}\{a=1\}+\mathbb{E}\left[Z_{n+1} \mid \mathcal{F}_n\right]\mathds{1}\{a=0\}-
\mathbb{E}\left[Z_{n} \mid \mathcal{F}_{n-1}\right].
\end{equation}
The empirical counterparts are given
by generating realizations $M_n^k$ of $M_n+\varepsilon_n$
based on a sample of $K_U$ simulations $\left(X_n^k\right)_{n=0}^N,$ for $k=K+K_L+1,\ldots,K+K_L+K_U.$
Again, we simulate realizations of the state space process independently from the simulations used for training. This gives us the following empirical differences:
\begin{equation*}
\Delta M_n^k= g\left(n,X_n^k\right)\mathds{1}\left\{a_n^k=1\right\}+\hat{\mathbb{E}}\left[Z_{n+1}^k \mid \mathcal{F}_{n}\right]\mathds{1}\left\{a_n^k=0\right\}-\hat{\mathbb{E}}\left[Z_n^k \mid \mathcal{F}_{n-1}\right],   
\end{equation*}
where $a_n^k$ is the chosen action at time $n$ for simulation path $k,$ and $\hat{\EE}\left[Z_{n+1}^k \mid \mathcal{F}_{n}\right]$ are the Monte Carlo averages approximating the continuation values 
for $n=0,\ldots,N-1$ and $k=K+K_L+1,\ldots,K+K_L+K_U.$

The continuation values appearing in the martingale increments are obtained through nested simulation, see the remark below:
\begin{equation*}
\hat{\mathbb{E}}\left[Z_{n+1}^k \mid \mathcal{F}_{n}\right]=
\frac{1}{J}\sum_{j=1}^{J}g\left(\tau_{n+1}^{k,j},\widetilde{X}_{\tau_{n+1}^{k,j}}^{k,j}\right),
\end{equation*}
where $J$ is the number of simulations in the inner step, and where, given each $X_n^k,$ we simulate (conditional) continuation paths $\widetilde{X}_{n+1}^{k,j},\ldots,\widetilde{X}_N^{k,j},$ $j=1,\ldots,J,$ that are conditionally independent of each other and of $X_{n+1}^k, \ldots,X_N^k,$ and $\tau_{n+1}^{k,j}$ is the value of $\tau_{n+1}^{\theta}$ along the path $\widetilde{X}_{n+1}^{k,j},\ldots,\widetilde{X}_N^{k,j}.$
\begin{remark}
It is not guaranteed than $\ee{\Delta M_n}[\mathcal{F}_{n-1}]=0$ for the Q-function learned via the neural network. To tackle this issue, we implement nested simulations as in \cite{dos} and \cite{becker20} to estimate the continuation values. This gives unbiased estimates of $M_n,$ which is crucial to obtain a valid upper bound. Moreover, the variance of the estimates decreases with the number of inner simulations, at the expense of increased computational time.
\end{remark}

\noindent 
Finally we can derive an unbiased estimate for the upper bound of the optimal value $V_0:$
\begin{equation*}
\hat{U}=\frac{1}{K_U}\sum_{k=K+K_L+1}^{K+K_L+K_U}\max_{0\leq n\leq N}\left(g\left(n,X_n^k\right)-M_n^k\right), \end{equation*}
with $M_n^k=\sum_{m=1}^n\Delta M_m^k.$
\subsubsection{Point estimate and confidence interval}\label{sec:CI}
The average between the lower and the upper bound for the point estimate of $V_0$ is considered in \cite{dos} and \cite{becker20}:
\begin{equation*}
\frac{\hat{L}+\hat{U}}{2}.
\end{equation*}
Assuming the discounted payoff process is square-integrable for all $n=0,\ldots,N,$ we also obtain that the upper bound $\max_{0\leq n \leq N}\left(g(n,X_n)-M_n-\varepsilon_n\right)$ is square-integrable. 
Let $z_{\alpha/2}$ denote the $(1-\alpha/2)$-quantile of a standard normal distribution. Defining the empirical standard deviations for the lower and upper bounds as
\begin{equation*}
\hat{\sigma}_L=\sqrt{\frac{1}{K_L-1}\sum_{k=K+1}^{K+K_L}\left(X^{k}_{\tau_{k}}-\hat{L}\right)^2},    
\end{equation*}
and
\begin{equation*}
\hat{\sigma}_U=\sqrt{\frac{1}{K_U-1}\sum_{k=K+K_L+1}^{K+K_L+K_U}\left(\max_{0\leq n\leq N}\left(g\left(n,X_n^k\right)-M_n^k\right)-\hat{U}\right)^2},
\end{equation*}
respectively, one can leverage the central limit theorem to build the asymptotic two-sided $(1-\alpha)$-confidence interval for the true optimal value $V_0:$
\begin{equation}
\left[\hat{L}-z_{\alpha/2}\frac{\hat{\sigma}_L}{\sqrt{K_L}},\hat{U}+z_{\alpha/2}\frac{\hat{\sigma}_U}{\sqrt{K_U}}\right].
\end{equation}
We have presented in this section the unifying properties of Q-learning compared to other approaches used to study optimal stopping problems. On one hand we do not require any iterative procedure and do not have to solve a potentially complicated optimization problem at each time step. Indeed the calibrated deep neural network solves the optimal stopping problem on the whole time interval.
On the other hand, we are able to accommodate any finite number of possible actions. Looking back at the direct approach of \cite{dos} to model optimal stopping policies, the parametric form of the stopping times would explode if we allow for more than two possible actions.

\section{Multiple stopping with constraints}\label{sec:multiple}
In this section we extend the previous problem to the more general framework of multiple-exercise options.
Examples from this family include swing options, which are common in the electricity market. The holder of such an option is entitled to exercise a certain right, e.g. the delivery of a certain amount of energy, several times, until the maturity of the contract.
The number of exercise rights and constraints on how they can be used are specified at inception. Typical constraints are a waiting period, i.e. a minimal waiting time between two exercise rights, and a volume constraint, which specifies how many units of the underlying asset can be purchased at each time.

Monte Carlo valuation of such products has been studied in \cite{meinshausen}, producing lower and upper bounds for the price. Building on the dual formulation for option pricing, alternative methods additionally accounting for waiting time constraints have been considered in \cite{bender11}, and for both volume and waiting time constraints in \cite{bender15}. In all cases, the multiple stopping problem is decomposed into several single stopping problems using the so-called reduction principle. The dual formulation in \cite{meinshausen} expresses the marginal excess value due to each additional exercise right as an infimum of an expectation over a certain space of martingales and a set of stopping times. A version of the dual problem in discrete time relying solely on martingales is presented in \cite{schoenmakers}, and a dual for the continuous time problem with a non-trivial waiting time constraint is derived in \cite{bender11}. In the latter case, the optimization is not only over a space of martingales, but also over adapted processes of bounded variation, which stem from the Doob-Meyer decomposition of the Snell envelope. The dual problem in the more general setting considering both volume and waiting time constraints is formulated in \cite{bender15}.

We now express the multiple stopping extension of the problem defined at (\ref{eq:stop}) for American-style options. Assume that the option holder has $n>0$ exercise rights over the lifetime of the contract. We consider the setting with no volume constraint and a waiting time $\delta>0$ which we assume to be a multiple of the time step resulting from the discretization of the interval $[0,T].$ The action space is still $\mathcal{A}=\{0,1\}.$ The state space now has an additional dimension corresponding to the number of remaining exercise opportunities.
As in standard stopping, we assume an absorbing state to which we jump once the $n$-th right has been exercised.

We note that due to the introduction of the waiting period, depending on the specification of $n,$ $T$ and $\delta,$ it may not be possible for the option holder to exercise all his rights before maturity, see the discussion in \cite{bender15}, where a "cemetery time" is defined.
If the specification of these parameters allows the exercise of all rights, and if we assume that $g\left(n,X_{t_n}\right)\geq 0$ for all $n=0,\ldots,N,$ then it will always be optimal to use all exercise rights.
The value of this option with $n>1$ exercise possibilities at time $0$ is given by 
\begin{equation}V_0^n=\sup_{\tau \in \mathcal{T}_\delta^n}\sum_{i=1}^n \mathbb{E}\left[g(\tau_i, X_{\tau_i})\right],
\label{eq:stopMult}
\end{equation}
where $\mathcal{T}_\delta^n$ is the set of $n$-tuples $\tau=(\tau_n,\tau_{n-1},\ldots,\tau_{1})$ of stopping times in $\{t_0,t_1,\ldots,t_N\}^n$ satisfying $\tau_i\geq \tau_{i+1}+\delta,$ for $i=1,\ldots,n-1.$

As in \cite{bender11}, one can combine the dynamic programming principle with the reduction principle to rewrite the primal optimization problem. We introduce the following functions defined in \cite{bender11} for $\nu=1,\ldots,n,$ and $k=N,\ldots,0:$
\begin{equation*}
q^{\nu}(k,x)=\ee{y^{\nu}\left(t_{k+1},X_{t_{k+1}}\right)}[X_{t_k}=x],
\end{equation*}
\begin{equation*}
q^{\nu}_{\delta}(k,x)=\ee{y^{\nu}\left(t_{k+\delta},X_{t_{k+\delta}}\right)}[X_{t_k}=x],
\end{equation*}
and we define the functions $y^\nu$ as 
\begin{equation*}
y^\nu(t_k,x)=\max\{g\left(t_k,x\right)+q_\delta^{\nu-1}\left(k,x\right),q^{\nu}\left(k,x\right)\}. \end{equation*}
We set $q^0_\delta(k,x)=0$ for all $x\in\RR^d$ and all $k\in\{0,\ldots,N\},$ and $g(t,X_t)=0$ for all $t>T.$ In the sequel, we denote as $y^{*,\nu}$ the Snell envelope for the problem with $\nu$ remaining exercise rights, for $\nu=1,\ldots,n.$
The reduction principle essentially states that the option with $n$ stopping times is as good as the single option paying the immediate cashflow plus the option with $(n-1)$ stopping times starting with a temporal delay of $\delta.$ This philosophy is also followed in \cite{meinshausen} by looking at the marginal extra payoff obtained with an additional exercise right.
The function $q^\nu$ corresponds to the continuation value in case of no exercise and the function $q^\nu_{\delta}$ to the continuation value in case of exercise, which requires a waiting period of $\delta.$

As shown in \cite{bender11}, one can derive the optimal policy from the continuation values.
Indeed, the optimal stopping times $\tau^{*,n}_\nu,$ for $\nu=1,\ldots,n,$ are given by
\begin{equation}
\tau^{*,n}_\nu=\inf\left\{k\geq \tau^{*,n}_{\nu+1}+\delta;
g\left(t_k,x\right)+q^{*,\nu-1}_{\delta}\left(k,x\right)\geq q^{*,\nu}\left(k,x\right)\right\},
\end{equation}
for starting value $\tau^{*,n}_{n+1}=-\delta$, which is a convention to make sure that the first exercise time is bounded from below by 0.
The optimal price is then 
\begin{equation*}
V_0^n=y^{*,n}\left(0,X_0\right),
\end{equation*}
and as in the single stopping framework, one can express the Snell envelope, the optimal stopping times and the continuation values in terms of the optimal Q-function $Q^*.$
Indeed, the continuation values can be expressed as
\begin{equation*}
q^{*,\nu}(k,X_{t_k})=Q^*\left(\left(t_k,X_{t_k}\right),\nu\right),
\end{equation*}
\begin{equation*}
q^{*,\nu}_{\delta}(k,X_{t_{k+\delta}})=Q^*\left(\left(t_{k+\delta},X_{t_{k+\delta}}\right),\nu\right),
\end{equation*}
the Snell envelope as
\begin{equation*}
y^{*,\nu}(t_k,X_{t_k})=\max\left\{g\left(t_k,X_{t_k}\right)+Q^*\left(\left(t_{k+\delta},X_{t_{k+\delta}}\right),\nu-1\right),Q^*\left(\left(t_k,X_{t_k}\right),\nu\right)
\right\}, 
\end{equation*}
and the optimal policy as
\begin{equation}\label{eq:stoppingMult}
\tau^{*,n}_\nu=\inf\left\{k\geq \tau^{*,n}_{\nu+1}+\delta;
g\left(t_k,X_{t_k}\right)+Q^*\left(\left(t_{k+\delta},X_{t_{k+\delta}}\right),\nu-1\right) \geq Q^*\left(\left(t_k,X_{t_k}\right),\nu\right)\right\}.
\end{equation}
To remain consistent with the notation introduced above for the functions $q^\nu,$ $q_{\delta}^\nu$ and $y^\nu,$ we denote by $Q^*\left(\left(t_k,X_{t_k}\right),\nu\right)$ the optimal Q-value in state $(t_k,X_{t_k},\nu),$ i.e. when there are $\nu$ remaining exercise rights.
Analogously to standard stopping with one exercise right, we can derive a lower bound from the primal problem and an upper bound from the dual problem. Moreover, we derive a confidence interval around the pointwise estimate based on Monte Carlo simulations.
\subsection{Lower bound}
As in Section \ref{sec:lb}, we denote by $Q\left(s,a;\hat{\theta}\right)$ the deep neural network calibrated through the training process using experience replay on a sample of simulated paths $\left(X_n^m\right)_{n=0}^N$ for $m=1,\ldots,M.$ 
We then generate a new set of $M_L$ simulations $\left(X_n^m\right)_{n=0}^N$, independent from the simulations used for training, for $m=M+1,\ldots,M+M_L.$ 
Then, using the learned stopping times 
\begin{equation*}
\tau_\nu^{m,n}=\inf\left\{k\geq \tau^{m,n}_{\nu+1}+\delta;
g\left(t_k,X_{t_k}^m\right)+Q\left(\left(t_{k+\delta},X_{t_{k+\delta}}^m\right),\nu-1;\hat{\theta}\right) \geq Q\left(\left(t_k,X_{t_k}^m\right),\nu;\hat{\theta}\right)\right\},
\end{equation*} 
for $\nu=1,\ldots,n,$ and with the convention $\tau^{m,n}_{n+1}=-\delta$ for all $m=M+1,\ldots,M+M_L,$
the Monte Carlo average
\begin{equation*}
\hat{L}^n=\frac{1}{M_L}\sum_{m=M+1}^{M+M_L} \sum_{\nu=1}^n g\left(\tau_\nu^{m,n},X_{\tau_\nu}^{m}\right)
\end{equation*}
yields a lower bound for the optimal value $V_0^n.$ In order to not overload the notation we consider $\tau_\nu=\tau_\nu^{m,n}$ in the subscript of the simulated state space above.
\subsection{Upper bound}
By exploiting the dual as in \cite{bender11}, one can also derive an upper bound on the optimal value $V_0^n.$ In order to do so, we consider the Doob decomposition of the supermartingales $y^{*,\nu}\left(t_k,X_{t_k}\right)$ given by
\begin{equation*}
y^{*,\nu}\left(t_k,X_{t_k}\right)=y^{*,\nu}\left(0,X_0\right)+M^{*,\nu}(k)-A^{*,\nu}(k),
\end{equation*}
where $M^{*,\nu}(k)$ is a $(\mathcal{F}_k)$-martingale with $M^{*,\nu}(0)=0$ and $A^{*,\nu}(k)$ is a non-decreasing  $(\mathcal{F}_k)$-predictable process with $A^{*,\nu}(0)=0$ for all $\nu=1,\ldots,n$ and $k=0,\ldots,N.$
The corresponding approximated terms using the learned Q-function lead to the following decomposition:
\begin{equation*}
y^{\nu}\left(t_k,X_{t_k}\right)=y^{\nu}\left(0,X_0\right)+M^{\nu}(k)-A^{\nu}(k), \end{equation*}
where $M^{\nu}$ are martingales with $M^\nu(0)=0,$ for $\nu=1,\ldots,n,$ and $A^{\nu}$ are integrable adapted processes in discrete time with $A^{\nu}(0)=0,$ for $\nu=1,\ldots,n.$ 

Moreover, one can write the increments of both the martingale and adapted components as:
\begin{equation*}
M^\nu(k)-M^\nu(k-1)=y^\nu\left(t_k,X_{t_k}\right)-\ee{y^\nu\left(t_k,X_{t_k}\right)}[\mathcal{F}_{k-1}],
\end{equation*} and
\begin{equation*}
A^\nu(k)-A^\nu(k-1)=y^\nu\left(t_{k-1},X_{t_{k-1}}\right)-\ee{y^\nu\left(t_k,X_{t_k}\right)}[\mathcal{F}_{k-1}].
\end{equation*}
Given the existence of the waiting period, one must also include the $\delta$-increment term
\begin{equation*}
A^\nu(k+\delta)-\ee{A^\nu(k+\delta)}[\mathcal{F}_k]
=M^\nu(k+\delta)-M^\nu(k)+\ee{y^\nu\left(t_{k+\delta},X_{t_{k+\delta}}\right)}[\mathcal{F}_k]-y^\nu\left(t_{k+\delta},X_{t_{k+\delta}}\right).
\end{equation*}
We note that for $\delta=1,$ since $A^{*,\nu}$ is a predictable process, this increment is equal to 0 for the optimal martingale $M^{*,\nu}$ and we retrieve the dual formulation in \cite{schoenmakers}.

As the dual formulation involves conditional expectations, we use nested simulation on a new set of $M_U$ independent simulations $\left(X_n^m\right)_{n=0}^N$ for $m=M+M_L+1,\ldots,M+M_L+M_U,$ with $M_U^{inner}$ inner simulations for each outer simulation as explained in Section \ref{sec:upper}, to approximate the one-step ahead continuation values
$\ee{y^\nu\left(t_k,X_{t_k}\right)}[\mathcal{F}_{k-1}]$ and the $\delta$-steps ahead continuation values 
$\ee{y^\nu\left(t_{k+\delta},X_{t_{k+\delta}}\right)}[\mathcal{F}_k].$ We denote the Monte Carlo estimators of these conditional expectations as
$\hat{\mathbb{E}}\left[y^\nu(t_k,X_{t_k})\mid \mathcal{F}_{k-1}\right]$
and
$\hat{\mathbb{E}}\left[y^\nu(t_{k+\delta},X_{t_{k+\delta}})\mid \mathcal{F}_k\right],$ respectively. We use these quantities to express the empirical counterparts of the adapted process increments for $m=M+M_L+1,\ldots,M+M_L+M_U:$
\begin{equation*}
A_m^\nu(k+\delta)-\hat{\mathbb{E}}\left[A^\nu(k+\delta)\mid \mathcal{F}_k\right]=M_m^\nu(k+\delta)-M_m^\nu(k)+\hat{\mathbb{E}}\left[y^\nu(t_{k+\delta},X_{t_{k+\delta}})\mid \mathcal{F}_k\right]-y_m^\nu\left(t_{k+\delta},X_{t_{k+\delta}}^m\right).
\end{equation*}
We can then rewrite the empirical counterparts of the Snell envelopes through the Q-function:
\begin{equation*}
y_m^\nu(t_k,X_{t_k}^m)=\max\left\{g\left(t_k,X_{t_k}^m\right)+Q\left(\left(t_{k+\delta},X_{t_{k+\delta}}^m\right),\nu-1;\hat{\theta}\right),Q\left(\left(t_k,X_{t_k}^m\right),\nu;\hat{\theta}\right)
\right\},
\end{equation*}
for $\nu=1,\ldots,n,$ $k=0,\ldots,N,$ $m=M+M_L+1,\ldots,M+M_L+M_U,$ and where we set $g(t,X_t)=0$ for $t>T$ and 
$Q\left(\left(t_k,X_{t_k}\right),0;\hat{\theta}\right)=0$ (no more exercises left).
The theoretical upper bound $U^n$ stemming from the dual problem in \cite{bender11} is given by:
\begin{multline*}
U^n=\EE\Bigg[\sup_{\substack{u_1,\ldots,u_n\in \NN\\ u_\nu\geq u_{\nu+1}+\delta}}\Bigg\{
\sum_{\nu=1}^{n-1}\Big(
g\left(u_\nu,X_{u_\nu}\right)-\left(
M^\nu(u_{\nu})-M^\nu(u_{\nu+1})\right)\\+A^{\nu}\left(u_{\nu+1}+\delta\right)-\ee{A^{\nu}\left(u_{\nu+1}+\delta\right)}[\mathcal{F}_{u_{\nu+1}}]
\Big)
+g\left(u_n,X_{u_n}\right)-M^n(u_n)
\Bigg\}\Bigg],    
\end{multline*}
We hence obtain $V_0^n\leq U^n$
and this bound is sharp for the exact Doob-Meyer decomposition terms $M^{*,\nu}$ and $A^{*,\nu},$
for $\nu=1,\ldots,n.$ We denote the sharp upper bound as
\begin{multline*}
U^{*,n}=\sup_{\substack{u_1,\ldots,u_n\in \NN\\ u_\nu\geq u_{\nu+1}+\delta}}\Bigg\{
\sum_{\nu=1}^{n-1}\Big(
g\left(u_\nu,X_{u_\nu}\right)-\left(
M^{*,\nu}(u_{\nu})-M^{*,\nu}(u_{\nu+1})\right)\\+A^{*,\nu}\left(u_{\nu+1}+\delta\right)-\ee{A^{*,\nu}\left(u_{\nu+1}+\delta\right)}[\mathcal{F}_{u_{\nu+1}}]
\Big)
+g\left(u_n,X_{u_n}\right)-M^{*,n}(u_n)
\Bigg\}.  
\end{multline*}
The following Monte Carlo average then yields an estimate of the upper bound for the optimal price $V_0^n:$
\begin{multline*}
\hat{U}^n=\frac{1}{M_U}\sum_{m=M+M_L+1}^{M+M_L+M_U}
\Bigg(\sup_{\substack{u_1,\ldots,u_n\in \NN\\ u_\nu\geq u_{\nu+1}+\delta}}
\sum_{\nu=1}^{n-1}\Big(
g\left(u_\nu,X^m_{u_\nu}\right)-
\left(M_m^\nu(u_\nu)-M_m^\nu(u_{\nu+1})\right)\\
+A_m^\nu\left(u_{\nu+1}+\delta\right)-\hat{\mathbb{E}}\left[A^\nu\left(u_{\nu+1}+\delta\right)\mid \mathcal{F}_{u_{\nu+1}}\right]
\Big)
+g\left(u_n,X^m_{u_n}\right)-
M_m^n\left(u_n\right)\Bigg).
\end{multline*} 
The pathwise supremum appearing in the expression of the upper bound can be computed using the recursion formula from Proposition 3.8 in \cite{bender15}. This recursion formula is implemented in our setting using the representation via the Q-function.

\subsection{Point estimate and confidence interval}
As in \cite{dos} and \cite{becker20}, we can construct a pointwise estimate for the optimal value in the multiple stopping framework in presence of a waiting time constraint by taking the pointwise estimate:
\begin{equation*}
\frac{\hat{L}^n+\hat{U}^n}{2}. 
\end{equation*}
By storing the empirical standard deviations for the lower and upper bounds that we denote as $\hat{\sigma}_{L^n}$ and $\hat{\sigma}_{U^n},$ respectively, one can leverage the central limit theorem as in Section \ref{sec:CI} to derive the asymptotic two-sided $(1-\alpha)$-confidence interval for the true optimal value $V_0^n:$
\begin{equation}
\left[\hat{L}^n-z_{\alpha/2}\frac{\hat{\sigma}_{L^n}}{\sqrt{M_L}},\hat{U}^n+z_{\alpha/2}\frac{\hat{\sigma}_{U^n}}{\sqrt{M_U}}\right].
\end{equation}
\subsection{Bias on the dual upper bound}
We now derive the extension of a result presented in \cite{meinshausen} on the bias resulting from the derivation of the upper bound, to the case of multiple stopping in presence of a waiting period. The dual problem from \cite{meinshausen}, being obtained from an optimization over a space of martingales and a set of stopping times, contains two terms: the bias coming from the martingale approximation, and the bias coming from the policy approximation. In the case with waiting constraint, as exemplified in the dual of \cite{bender11}, we show how one can again control the bias in the approximations to the $n$ Doob-Meyer decompositions of the Snell envelopes $y^{*,\nu},$ for $\nu=1,\ldots,n.$ Indeed, in the dual problem, each martingale $M^{*,\nu}$ is approximated by a martingale $M^\nu,$ and each predictable non-decreasing process $A^{*,\nu}$ is approximated by an integrable adapted process in discrete time $A^{\nu}.$
We proceed in three steps and analyse separately the bias from each approximation employed:
\begin{itemize}
    \item Martingale terms: \[\EE\Bigg[\sup_{\substack{u_1,\ldots,u_n\in \NN\\ u_\nu\geq u_{\nu+1}+\delta}}
\sum_{\nu=1}^{n-1}\Big\lvert
M^\nu(u_\nu)-M^\nu(u_{\nu+1})-\left(M^{*,\nu}(u_\nu)-M^{*,\nu}(u_{\nu+1})\right)\Big\lvert\Bigg]\]
    \item Adapted terms: \begin{multline*}
   \EE\Bigg[\sup_{\substack{u_1,\ldots,u_n\in \NN\\ u_\nu\geq u_{\nu+1}+\delta}}
\sum_{\nu=1}^{n-1}\Big\lvert A^{\nu}\left(u_{\nu+1}+\delta\right)-\ee{A^{\nu}\left(u_{\nu+1}+\delta\right)}[\mathcal{F}_{u_{\nu+1}}]\\-\left(
A^{*,\nu}\left(u_{\nu+1}+\delta\right)-\ee{A^{*,\nu}\left(u_{\nu+1}+\delta\right)}[\mathcal{F}_{u_{\nu+1}}]
\right)\Big\lvert\Bigg]   
    \end{multline*}
    \item Final term: \[\mathbb{E}\left[\sup_{0\leq n\leq N}\big\lvert g\left(u_n,X_{u_n}\right)-M^n(u_n)-
\left(g\left(u_n,X_{u_n}\right)-M^{*,n}(u_n)\right)\big\lvert\right]\]
\end{itemize}
The error in the final term $g\left(u_n,X_{u_n}\right)-M^n(u_n)$ can be bounded using the methodology in \cite{meinshausen}.
Define
\begin{equation*}
D_{y,n}=\sup_{\substack{\\0\leq k\leq N\\x \in E}}\Big\lvert y^{*,n}(t_k,x)-y^n(t_k,x)\Big\lvert, 
\end{equation*}
as the distance between the true Snell envelope and its approximation, and
\begin{equation*}
\sigma_{M_U^{inner},n}^2=\sup_{\substack{1\leq k\leq N\\x\in E}}\ee{\left(\hat{\mathbb{E}}\left[y^n(t_k,X_{t_k})\big\lvert X_{t_{k-1}}=x\right]-\EE\left[y^n(t_k,X_{t_k})\big\lvert X_{t_{k-1}}=x\right]\right)^2}[X_{t_{k-1}}=x], 
\end{equation*}
as an upper bound on the Monte Carlo error from the 1-step ahead nested simulation to approximate the continuation values.

In order to study the bias coming from the martingale approximations, we define
\begin{equation*}
D_{y}=\sup_{\substack{\nu=1,\ldots,n-1\\u_1,\ldots,u_n\in \NN\\u_\nu\geq u_{\nu+1}+\delta\\x \in E}}\Big\lvert y^{*,\nu}(u_\nu,x)-y^\nu(u_{\nu},x)\Big\lvert,  \end{equation*}
as the distance between the optimal Snell envelope and its approximation over all remaining exercise times,
\small
\begin{equation*}
\sigma_{M_U^{inner}}^2=   \sup_{\substack{\nu=1,\ldots,n-1\\u_1,\ldots,u_n\in \NN\\u_\nu\geq u_{\nu+1}+\delta\\x\in E}}\ee{\left(\hat{\mathbb{E}}\left[y^\nu(u_{\nu+1},X_{u_{\nu+1}})\big\lvert X_{u_{\nu}}=x\right]-\EE\left[y^\nu(u_{\nu+1},X_{u_{\nu+1}})\big\lvert X_{u_{\nu}}=x\right]\right)^2}[X_{u_{\nu}}=x],
\end{equation*}
\normalsize
and
\small
\begin{equation*}
\sigma_{M_U^{inner},\delta}^2=   \sup_{\substack{\nu=1,\ldots,n-1\\u_1,\ldots,u_n\in \NN\\u_\nu\geq u_{\nu+1}+\delta\\x\in E}}\ee{\left(\hat{\mathbb{E}}\left[y^\nu(u_\nu+\delta,X_{u_\nu+\delta})\big\lvert X_{u_{\nu}}=x\right]-\EE\left[y^\nu(u_\nu+\delta,X_{u_\nu+\delta})\big\lvert X_{u_{\nu}}=x\right]\right)^2}[X_{u_{\nu}}=x].
\end{equation*}
\normalsize
In other words, $\sigma_{M_U^{inner}}$ and $\sigma_{M_U^{inner},\delta}$ correspond to upper bounds on the standard deviations of the 1-step ahead and $\delta$-steps ahead Monte Carlo estimates of the continuation values, respectively, using a sample of $M_U^{inner}$ independent simulations starting from the endpoint of simulation path $m$ for $m=M+M_L+1,\ldots,M+M_L+M_U$.

The following theorem allows to control for the bias in the derivation of the upper bound from the dual problem.
\begin{theorem}[Dual upper bound bias]\label{thm:thm2}
Let $B_\delta^n\left(M,A\right)$ be the total bias which is the difference between the approximate upper bound using $\left(M^{\nu}\right)_{\nu=1,\ldots,n}$ and $\left(A^{\nu}\right)_{\nu=1,\ldots,n-1}$ and the theoretical sharp upper bound using the optimal Doob decomposition components $\left(M^{*,\nu}\right)_{\nu=1,\ldots,n}$ and $\left(A^{*,\nu}\right)_{\nu=1,\ldots,n-1}$
\[
B_\delta^n\left(M,A\right)=U^n-U^{*,n}.
\]
The following result holds:
\small
\begin{equation}
B_\delta^n\left(M,A\right)\leq  
8(n-1)\sqrt{\left(4D_y^2+\sigma_{M_U^{inner}}^2\right)T}+(n-1)\left(\sigma_{M_U^{inner},\delta}+D_y\right)+2\sqrt{\left(4D_{y,n}^2+\sigma_{M_U^{inner},n}^2\right)T}.
\end{equation}
\end{theorem}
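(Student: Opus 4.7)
The plan is to decompose $B_\delta^n(M,A) = U^n - U^{*,n}$ along the three lines already flagged in the excerpt (martingale terms, adapted terms, final term), bound each contribution separately, and then sum. The first reduction is entirely algebraic: since both $U^n$ and $U^{*,n}$ are defined as expectations of pathwise suprema over the same constraint set $\{u_1,\dots,u_n\in\mathbb{N},\,u_\nu\geq u_{\nu+1}+\delta\}$, and since $|\sup f - \sup g|\leq \sup|f-g|$, I would bound $U^n - U^{*,n}$ by the expectation of the supremum of the absolute difference of the two summands. That difference splits cleanly into the three bulleted quantities by the triangle inequality, so it remains to estimate each of them.

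For the martingale bias $\mathbb{E}\big[\sup\sum_{\nu=1}^{n-1}|(M^\nu(u_\nu)-M^\nu(u_{\nu+1}))-(M^{*,\nu}(u_\nu)-M^{*,\nu}(u_{\nu+1}))|\big]$, I would first dominate the sum over $\nu$ by $2(n-1)\max_\nu \sup_{0\leq k\leq N}|M^\nu(k)-M^{*,\nu}(k)|$, then apply Jensen and Doob's $L^2$ maximal inequality to each $\nu$. Because $M^\nu-M^{*,\nu}$ is itself a martingale with $M^\nu(0)-M^{*,\nu}(0)=0$, orthogonality of increments gives $\mathbb{E}[(M^\nu(N)-M^{*,\nu}(N))^2]=\sum_{k=1}^N \mathbb{E}[(\Delta M^\nu_k-\Delta M^{*,\nu}_k)^2]$. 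Each single-step increment difference has the explicit form
\[
(y^\nu-y^{*,\nu})(t_k,X_{t_k}) - \big(\hat{\mathbb{E}}[y^\nu(t_k,X_{t_k})\mid\mathcal{F}_{k-1}]-\mathbb{E}[y^{*,\nu}(t_k,X_{t_k})\mid\mathcal{F}_{k-1}]\big),
\]
so after splitting the nested-MC estimator term and using $\|a+b\|_2^2\leq 2\|a\|_2^2+2\|b\|_2^2$ type inequalities, each increment is bounded in $L^2$ by a quantity of order $4D_y^2+\sigma_{M_U^{\mathrm{inner}}}^2$. Summing the $N=T$ increments, applying Doob, and collecting the $2(n-1)$ factor together with the constants from $L^2$-to-$L^1$ produces the $8(n-1)\sqrt{(4D_y^2+\sigma_{M_U^{\mathrm{inner}}}^2)T}$ contribution. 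The final term is handled identically but with $D_{y,n}$ and $\sigma_{M_U^{\mathrm{inner}},n}^2$ in place of their $\nu$-uniform counterparts, producing the $2\sqrt{(4D_{y,n}^2+\sigma_{M_U^{\mathrm{inner}},n}^2)T}$ piece.

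For the adapted terms I would rewrite $A^\nu(k+\delta)-\mathbb{E}[A^\nu(k+\delta)\mid\mathcal{F}_k]$ using the identity already recorded in the excerpt,
\[
A^\nu(k+\delta)-\mathbb{E}[A^\nu(k+\delta)\mid\mathcal{F}_k]=M^\nu(k+\delta)-M^\nu(k)+\mathbb{E}[y^\nu(t_{k+\delta},X_{t_{k+\delta}})\mid\mathcal{F}_k]-y^\nu(t_{k+\delta},X_{t_{k+\delta}}),
\]
and the analogous identity for the optimal quantities. The martingale parts on the two sides cancel against each other when we take the difference (they have already been absorbed into the first bullet), so the remaining contribution is the $L^1$-norm of the difference of nested-MC estimates of the $\delta$-step continuation value plus the sup-norm error $D_y$ between $y^\nu$ and $y^{*,\nu}$ at time $k+\delta$. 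Bounding the MC error in $L^1$ by its $L^2$ counterpart $\sigma_{M_U^{\mathrm{inner}},\delta}$ and summing over the $n-1$ indices $\nu$ yields the $(n-1)(\sigma_{M_U^{\mathrm{inner}},\delta}+D_y)$ term.

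The main obstacle I anticipate is the bookkeeping in the martingale step: making sure that the single application of Doob's inequality is uniform over the random indices $u_\nu,u_{\nu+1}$ appearing inside the supremum, so that the bound really factors as a product of $(n-1)$ with a single-$\nu$ estimate, and carefully tracking the factors of two that accumulate from (i) $|a-b|\leq |a|+|b|$ in reducing increment differences to endpoint differences, (ii) Doob's $L^2$ constant, and (iii) the Jensen inequality step $\mathbb{E}[X]\leq \sqrt{\mathbb{E}[X^2]}$. The adapted-term piece is subtler than it looks because one must avoid double-counting the martingale error already charged in the first bullet; doing so cleanly requires grouping the $M^\nu(k+\delta)-M^\nu(k)$ contribution with the martingale piece before applying the triangle inequality, which is exactly what the excerpt's decomposition into three separate bullets implicitly achieves.
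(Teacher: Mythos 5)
Your overall architecture is the same as the paper's: split the bias into the three bulleted contributions, introduce the martingale $R^\nu=M^\nu-M^{*,\nu}$, bound its one-step increments in $L^2$ by $4D_y^2+\sigma_{M_U^{inner}}^2$ (the paper's Lemma on $R^\nu$), and then apply Cauchy--Schwarz followed by Doob's submartingale inequality. However, your accounting of the martingale error contains a genuine mistake. The martingale bullet alone yields $4(n-1)\sqrt{(4D_y^2+\sigma_{M_U^{inner}}^2)T}$, not $8(n-1)\sqrt{\cdot}$: the sum over $\nu$ of $\lvert R^\nu(u_\nu)\rvert+\lvert R^\nu(u_{\nu+1})\rvert$ gives $2(n-1)$ terms, and each satisfies $\EE\bigl[\sup_t\lvert R^\nu_t\rvert\bigr]\leq\bigl(\EE\bigl[\sup_t (R^\nu_t)^2\bigr]\bigr)^{1/2}\leq 2\bigl(\EE\bigl[(R^\nu_T)^2\bigr]\bigr)^{1/2}\leq 2\sqrt{(4D_y^2+\sigma_{M_U^{inner}}^2)T}$, so the product is $2(n-1)\cdot 2=4(n-1)$. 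There is no additional factor of two to be ``collected'' from the $L^2$-to-$L^1$ step, since Cauchy--Schwarz carries no constant. (A secondary caution: replacing $\sum_\nu$ by $(n-1)\max_\nu$ before taking expectations is not innocuous, because $\EE[\max_\nu]$ is not $\max_\nu\EE$; the paper avoids this by bounding the sum term by term.)

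The missing $4(n-1)\sqrt{\cdot}$ lives exactly where you discarded it. In the adapted-term identity, the martingale increment that appears is $M^\nu(u_{\nu+1}+\delta)-M^\nu(u_{\nu+1})$, taken over the window $[u_{\nu+1},u_{\nu+1}+\delta]$, whereas the martingale bullet involves $M^\nu(u_\nu)-M^\nu(u_{\nu+1})$, taken over $[u_{\nu+1},u_\nu]$. These are different increments: they neither cancel against their optimal counterparts (their difference is $R^\nu(u_{\nu+1}+\delta)-R^\nu(u_{\nu+1})$, which is generically nonzero) nor have they ``already been absorbed into the first bullet.'' The paper's proposition on the adapted terms bounds $\lvert M^\nu(u_{\nu+1}+\delta)-M^{*,\nu}(u_{\nu+1}+\delta)\rvert\leq\lvert R^\nu_{u_{\nu+1}+\delta}\rvert$ and $\lvert M^\nu(u_{\nu+1})-M^{*,\nu}(u_{\nu+1})\rvert\leq\lvert R^\nu_{u_{\nu+1}}\rvert$ and reruns the Cauchy--Schwarz/Doob argument, so that bullet contributes $(n-1)\bigl(\sigma_{M_U^{inner},\delta}+D_y+4\sqrt{(4D_y^2+\sigma_{M_U^{inner}}^2)T}\bigr)$; the theorem's coefficient $8(n-1)$ is then $4(n-1)+4(n-1)$. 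As written, your three bullets sum to $4(n-1)\sqrt{\cdot}+(n-1)(\sigma_{M_U^{inner},\delta}+D_y)+2\sqrt{\cdot}$, which would be a strictly stronger bound than the stated one but is not justified by your argument. To prove the theorem you must keep the $\delta$-window martingale increments inside the adapted terms and bound them separately. Your treatment of the final term matches the paper and is fine.
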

\normalsize

\noindent
In order to prove this result, let us state an intermediary result which will appear in the proofs of the following propositions.
Define 
\begin{equation*}
R_t^\nu=M_t^\nu-M_t^{*,\nu},    
\end{equation*}
as the difference between the martingale approximation and the optimal martingale for the problem with $\nu$ remaining exercise times, for $\nu=1,\ldots,n.$
\begin{lemma}\label{lem:lem1}
The process $R^\nu$ is a martingale with $R^\nu(0)=0,$ for all $\nu=1,\ldots,n,$ and we have the following inequality on the second moment of the martingale increments, for all $0\leq t<T$ and $\nu=1,\ldots,n:$
\begin{equation*}
\ee{\left(R_{t+1}^\nu-R_{t}^\nu\right)^2}[\mathcal{F}_t]\leq 4D_y^2+\sigma_{M_U^{inner}}^2.    
\end{equation*}
As a consequence,
\begin{equation*}
\ee{\left(R_{t}^\nu\right)^2}\leq\left(4D_y^2+\sigma_{M_U^{inner}}^2\right)t.    
\end{equation*}
\begin{proof}
The proof of this lemma follows similar lines to the proof of Lemma 6.1 in \cite{meinshausen}. Let $\nu\in\{1,\ldots,n\}.$ As a difference of martingales with initial value 0, $R^\nu$ is also a martingale with initial value 0.
The increments can be rewritten as
\small
\begin{align*}
R_{t+1}^\nu-R_{t}^\nu&=
M_{t+1}^\nu-M_{t+1}^{*,\nu}-\left(M_t^\nu-M_t^{*,\nu}\right)\\
&=y^\nu\left(t+1,X_{t+1}\right)-\hat{\EE}\left[y^\nu\left(t+1,X_{t+1}\right)\mid\mathcal{F}_t\right]-
\left(y^{*,\nu}\left(t+1,X_{t+1}\right)-\ee{y^{*,\nu}\left(t+1,X_{t+1}\right)}[\mathcal{F}_t]\right)\\
&=y^{\nu}\left(t+1,X_{t+1}\right)-y^{*,\nu}\left(t+1,X_{t+1}\right)
+\ee{y^{*,\nu}\left(t+1,X_{t+1}\right)}[\mathcal{F}_t]-\ee{y^{\nu}\left(t+1,X_{t+1}\right)}[\mathcal{F}_t]\\
&+\ee{y^{\nu}\left(t+1,X_{t+1}\right)}[\mathcal{F}_t]-\hat{\EE}\left[y^\nu\left(t+1,X_{t+1}\right)\mid\mathcal{F}_t\right].
\end{align*}
\normalsize
Now, both differences between the first two terms and the third and fourth term in the final equality are bounded in absolute value by $D_y.$ The last term corresponds to the error from the Monte Carlo approximation of the 1-step ahead continuation values. Since this error term has mean 0, a second moment bounded by $\sigma_{M^{inner}_U}^2$ and is independent of the term
$\left(y^{*,\nu}\left(t+1,X_{t+1}\right)-y^{\nu}\left(t+1,X_{t+1}\right)\right),$ we obtain the desired result.
\end{proof}
\end{lemma}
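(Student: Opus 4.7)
The plan has three pieces: the martingale property is immediate, the one-step conditional second-moment bound comes from an additive split of the increment whose cross-term vanishes by a conditional-independence argument, and the unconditional $L^2$ bound then follows from orthogonality of martingale increments. The essential inputs are the explicit forms $M^\nu(t+1)-M^\nu(t)=y^\nu(t+1,X_{t+1})-\hat{\EE}[y^\nu(t+1,X_{t+1})\mid\mathcal{F}_t]$ and $M^{*,\nu}(t+1)-M^{*,\nu}(t)=y^{*,\nu}(t+1,X_{t+1})-\EE[y^{*,\nu}(t+1,X_{t+1})\mid\mathcal{F}_t]$ derived earlier in the section.

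First, since both $M^\nu$ and $M^{*,\nu}$ are $(\mathcal{F}_t)$-martingales starting at $0$, the difference $R^\nu=M^\nu-M^{*,\nu}$ inherits both properties with no further work. For the one-step bound, I would add and subtract the exact conditional expectation $\EE[y^\nu(t+1,X_{t+1})\mid\mathcal{F}_t]$ to split
\begin{equation*}
R^\nu_{t+1}-R^\nu_t = A_t + B_t,
\end{equation*}
where $A_t := \bigl(y^\nu-y^{*,\nu}\bigr)(t+1,X_{t+1}) + \EE\bigl[(y^{*,\nu}-y^\nu)(t+1,X_{t+1})\mid\mathcal{F}_t\bigr]$ and $B_t := \EE[y^\nu(t+1,X_{t+1})\mid\mathcal{F}_t] - \hat{\EE}[y^\nu(t+1,X_{t+1})\mid\mathcal{F}_t]$. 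Both summands of $A_t$ are bounded pathwise by $D_y$ (by the definition of $D_y$), so $|A_t|\leq 2D_y$ and $A_t^2\leq 4D_y^2$. The term $B_t$ is exactly the inner Monte Carlo error for the one-step continuation value of $y^\nu$, so by the definition of $\sigma_{M_U^{inner}}^2$ we have $\EE[B_t\mid\mathcal{F}_t]=0$ and $\EE[B_t^2\mid\mathcal{F}_t]\leq \sigma_{M_U^{inner}}^2$.

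The step that needs the most care is the cross term. Because $A_t$ is $\mathcal{F}_{t+1}$-measurable while $B_t$ depends only on the inner simulation paths drawn at step $t$, $B_t$ is independent of $\mathcal{F}_{t+1}$ conditional on $\mathcal{F}_t$; hence $\EE[A_t B_t\mid\mathcal{F}_t]=\EE[A_t\,\EE[B_t\mid\mathcal{F}_{t+1}]\mid\mathcal{F}_t]=0$. Combining then yields $\EE[(R^\nu_{t+1}-R^\nu_t)^2\mid\mathcal{F}_t]\leq 4D_y^2+\sigma_{M_U^{inner}}^2$, the stated one-step bound.

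For the unconditional second moment, I would invoke orthogonality of martingale increments: $\EE[(R^\nu_t)^2]=\sum_{s=0}^{t-1}\EE[(R^\nu_{s+1}-R^\nu_s)^2]$, and bound each term by $4D_y^2+\sigma_{M_U^{inner}}^2$ via the tower property applied to the conditional bound above. Summing $t$ such bounds delivers $(4D_y^2+\sigma_{M_U^{inner}}^2)t$, as claimed. The main obstacle I anticipate is rigorously justifying the conditional-independence claim that kills the cross term; it relies on the (implicit) assumption in the nested Monte Carlo scheme of Section \ref{sec:upper} that the inner paths launched from $X_t$ are generated independently of the future outer trajectory $(X_{t+1},\ldots,X_N)$, and anyone reproducing the argument must make sure their sampling protocol respects this.
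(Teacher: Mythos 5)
Your proof is correct and takes essentially the same route as the paper's: the same add-and-subtract decomposition of $R^\nu_{t+1}-R^\nu_t$ into a part bounded pathwise by $2D_y$ (the two differences of approximate and optimal Snell envelopes) plus the mean-zero inner Monte Carlo error, with the cross term eliminated by conditional independence, and the unconditional bound via orthogonality of martingale increments. Your explicit tower-property argument for the vanishing cross term and the spelled-out increment-orthogonality step only make precise what the paper leaves implicit.
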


\begin{proposition}\label{prop:biasFinal}
The bias in the final term can be bounded by
\begin{equation}
\mathbb{E}\left[\sup_{0\leq n\leq N}\Big\lvert g\left(u_n,X_{u_n}\right)-M^n(u_n)-
\left(g\left(u_n,X_{u_n}\right)-M^{*,n}(u_n)\right)\Big\lvert\right]\leq 2\sqrt{\left(4D_{y,n}^2+\sigma_{M_U^{inner},n}^2\right)T}.
\end{equation}
\end{proposition}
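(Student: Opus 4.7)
The first observation is that the $g(u_n, X_{u_n})$ terms cancel inside the absolute value, so the quantity to bound reduces to
\[
\mathbb{E}\left[\sup_{0\leq u_n\leq N}\big\lvert M^n(u_n)-M^{*,n}(u_n)\big\lvert\right]
=\mathbb{E}\left[\sup_{0\leq t\leq N}\big\lvert R_t^n\big\lvert\right],
\]
where $R^n_t = M^n_t - M^{*,n}_t$ is the martingale introduced just before Lemma \ref{lem:lem1}. This is the key simplification: although the statement involves payoff terms, they drop out and we are left with a purely martingale-theoretic quantity.

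Second, I would invoke Lemma \ref{lem:lem1} applied with $\nu=n$ (using the Snell envelope bound $D_{y,n}$ and the one-step Monte Carlo error $\sigma_{M_U^{inner},n}$ relevant to the final Snell envelope $y^{*,n}$, rather than the suprema $D_y$ and $\sigma_{M_U^{inner}}$ taken over $\nu=1,\ldots,n-1$). The proof of Lemma \ref{lem:lem1} goes through identically in this case, yielding
\[
\mathbb{E}\left[(R_t^n)^2\right]\leq \left(4D_{y,n}^2+\sigma_{M_U^{inner},n}^2\right)t
\]
for all $0\leq t\leq N$, hence in particular at $t=T$ (identifying $N$ with the time horizon $T$ through the discretization).

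Third, I would apply Doob's $L^2$ maximal inequality to the martingale $R^n$: since $R^n$ is a square-integrable martingale with $R^n_0=0$,
\[
\mathbb{E}\left[\sup_{0\leq t\leq T}(R_t^n)^2\right]\leq 4\,\mathbb{E}\left[(R_T^n)^2\right].
\]
Combining this with Jensen's inequality (or equivalently Cauchy--Schwarz, $\EE[|Y|]\leq \sqrt{\EE[Y^2]}$) applied to $Y=\sup_t |R_t^n|$ gives
\[
\mathbb{E}\left[\sup_{0\leq t\leq T}|R_t^n|\right]\leq 2\sqrt{\mathbb{E}\left[(R_T^n)^2\right]}\leq 2\sqrt{\left(4D_{y,n}^2+\sigma_{M_U^{inner},n}^2\right)T},
\]
which is exactly the claimed bound.

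Overall, there is no substantial obstacle: the proof is a three-line application of the cancellation of $g$, the second-moment bound from Lemma \ref{lem:lem1}, and Doob's $L^2$ inequality. The only minor subtlety worth flagging is the use of the index-$n$ versions $D_{y,n}$ and $\sigma_{M_U^{inner},n}$, which are tighter (pointwise in $\nu=n$) than the suprema $D_y$ and $\sigma_{M_U^{inner}}$ used for the other two bias components; one should verify that the proof of Lemma \ref{lem:lem1} indeed extends without modification when these tighter constants are substituted, which is immediate since that proof only relied on the supremum being taken over the relevant index set.
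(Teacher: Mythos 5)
Your proof is correct and follows essentially the same route as the paper's: cancel the $g$ terms to reduce to $\mathbb{E}[\sup_t|R^n_t|]$, apply Cauchy--Schwarz and Doob's submartingale inequality to get $2\sqrt{\mathbb{E}[(R^n_T)^2]}$, and conclude with the second-moment bound of Lemma~\ref{lem:lem1}. Your explicit remark that one must use the index-$n$ quantities $D_{y,n}$ and $\sigma_{M_U^{inner},n}$ (since $D_y$ and $\sigma_{M_U^{inner}}$ are suprema over $\nu=1,\ldots,n-1$ only) is a point the paper passes over silently, and is a worthwhile clarification.
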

\begin{proof}
We consider the error in the final term
\begin{equation*}
\EE\bigg[\sup_{\substack{0\leq n\leq N}}\Big\lvert M^n(u_n)-M^{*,n}(u_n)\Big\lvert\bigg]
\leq \EE\bigg[
\sup_{\substack{0\leq t\leq T}}\big\lvert R_t^n\big\lvert
\bigg].
\end{equation*}
From the Cauchy-Schwarz inequality,
\begin{equation*}
\EE\Big[
\sup_{\substack{0\leq t\leq T}}\big\lvert R_t^n\big\lvert
\Big]\leq 
\Big(\EE\Big[
\sup_{\substack{0\leq t\leq T}} \left(R_t^n\right)^2
\Big]\Big)^{1/2},
\end{equation*}
and since $R^n$ is a martingale, $(R^n)^2$ is a non-negative submartingale which is well-defined from the existence of $D_y$ and $\sigma_{M_U^{inner}}.$
Then, using Doob's submartingale inequality,
\begin{equation*}
\EE\Big[
\sup_{\substack{0\leq t\leq T}}\left(R_t^n\right)^2
\Big]\leq 4\EE\big[\left(R_T^n\right)^2\big].
\end{equation*}
This last inequality in combination with Lemma \ref{lem:lem1} leads to the desired result.
\end{proof}

\begin{proposition}\label{prop:biasMart}
The bias from the approximations of the martingale terms can be bounded by
\begin{multline}
\EE\Bigg[\sup_{\substack{u_1,\ldots,u_n\in \NN\\ u_\nu\geq u_{\nu+1}+\delta}}
\sum_{\nu=1}^{n-1}\Big\lvert
M^\nu(u_\nu)-M^\nu(u_{\nu+1})-\left(M^{*,\nu}(u_\nu)-M^{*,\nu}(u_{\nu+1})\right)\Big\lvert\Bigg]\\
\leq 4(n-1)\sqrt{\left(4D_y^2+\sigma_{M_U^{inner}}^2\right)T}.  
\end{multline}
\end{proposition}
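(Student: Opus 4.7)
The plan is to reduce this proposition to the one-martingale bound already established in the proof of Proposition \ref{prop:biasFinal}, applied separately to each of the $n-1$ martingale-difference approximations, and then sum the resulting estimates.

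First, I would introduce the processes $R_t^\nu = M_t^\nu - M_t^{*,\nu}$ from Lemma \ref{lem:lem1} and note that the inner summand can be rewritten as
\[
\bigl| M^\nu(u_\nu) - M^\nu(u_{\nu+1}) - (M^{*,\nu}(u_\nu) - M^{*,\nu}(u_{\nu+1})) \bigr|
= \bigl| R^\nu_{u_\nu} - R^\nu_{u_{\nu+1}} \bigr|.
\]
By the triangle inequality this is bounded above by $2 \sup_{0 \leq t \leq T} |R_t^\nu|$, which crucially no longer depends on the stopping indices $u_\nu, u_{\nu+1}$. I would then push the supremum over $(u_1,\dots,u_n)$ through the now index-free summand and use subadditivity of expectation to obtain
\[
\EE\Bigg[\sup_{\substack{u_1,\ldots,u_n\in \NN\\ u_\nu\geq u_{\nu+1}+\delta}}
\sum_{\nu=1}^{n-1}\bigl| R^\nu_{u_\nu} - R^\nu_{u_{\nu+1}} \bigr|\Bigg]
\leq 2 \sum_{\nu=1}^{n-1} \EE\Big[\sup_{0 \leq t \leq T} |R_t^\nu|\Big].
\]

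Next, for each fixed $\nu \in \{1,\dots,n-1\}$ I would apply exactly the chain of estimates used in Proposition \ref{prop:biasFinal}: Cauchy--Schwarz gives $\EE[\sup_t |R_t^\nu|] \leq (\EE[\sup_t (R_t^\nu)^2])^{1/2}$; since $R^\nu$ is a martingale with $R^\nu(0)=0$ by Lemma \ref{lem:lem1}, $(R^\nu)^2$ is a non-negative submartingale and Doob's maximal inequality yields $\EE[\sup_t (R_t^\nu)^2] \leq 4\,\EE[(R_T^\nu)^2]$; finally the second-moment bound from Lemma \ref{lem:lem1} gives $\EE[(R_T^\nu)^2] \leq (4D_y^2 + \sigma_{M_U^{inner}}^2)\,T$. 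Combining these three steps yields $\EE[\sup_t |R_t^\nu|] \leq 2\sqrt{(4D_y^2 + \sigma_{M_U^{inner}}^2)T}$ uniformly in $\nu$.

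Summing this bound over $\nu = 1, \dots, n-1$ and multiplying by the factor $2$ from the triangle-inequality step produces the claimed bound $4(n-1)\sqrt{(4D_y^2 + \sigma_{M_U^{inner}}^2)T}$. There is no real obstacle here beyond being careful about two points: (i) the supremum over the admissible waiting-time-constrained tuples must be dominated by a quantity independent of the tuple before expectations can be distributed across the sum, which is exactly what the triangle-inequality step accomplishes; and (ii) Lemma \ref{lem:lem1}'s variance bound must hold uniformly in $\nu$, which it does since $D_y$ and $\sigma_{M_U^{inner}}$ are defined as suprema over all $\nu \in \{1,\dots,n-1\}$.
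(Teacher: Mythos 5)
Your proposal is correct and follows essentially the same route as the paper: split the martingale-difference error via the triangle inequality into terms involving $R^\nu=M^\nu-M^{*,\nu}$, dominate them by a supremum independent of the exercise tuple, and apply Cauchy--Schwarz, Doob's submartingale inequality and Lemma \ref{lem:lem1} to each of the $n-1$ terms. The only cosmetic difference is that you bound the summand by $2\sup_{0\leq t\leq T}\lvert R_t^\nu\rvert$ at once rather than by the two separate suprema over the constrained tuples, which yields the same factor of $4(n-1)$.
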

\begin{proof}
The error in the martingale term for the problem with $\nu$ remaining exercise times can be expressed as
\sloppy
\begin{align*}
\big\lvert
M^\nu(u_\nu)-M^\nu(u_{\nu+1})-\left(M^{*,\nu}(u_\nu)-M^{*,\nu}(u_{\nu+1})\right)\big\lvert
&\leq 
\big\lvert
R^\nu(u_\nu)\big\lvert+
\big\lvert R^\nu(u_{\nu+1})\big\lvert\\
&\leq \sup_{\substack{u_1,\ldots,u_n\in \NN\\ u_\nu\geq u_{\nu+1}+\delta}}\big\lvert
R^\nu(u_\nu)\big\lvert+
\sup_{\substack{u_1,\ldots,u_n\in \NN\\ u_\nu\geq u_{\nu+1}+\delta}}
\big\lvert R^\nu(u_{\nu+1})\big\lvert.
\end{align*}
By taking the sum over $\nu=1,\ldots,n-1,$ taking the supremum over the subspace of $\NN^n$ with the constraints imposed by the presence of the waiting period, and finally taking the expectation, we obtain
\begin{multline*}
\EE\Bigg[\sup_{\substack{u_1,\ldots,u_n\in \NN\\ u_\nu\geq u_{\nu+1}+\delta}}
\sum_{\nu=1}^{n-1}\Big\lvert
M^\nu(u_\nu)-M^\nu(u_{\nu+1})-\left(M^{*,\nu}(u_\nu)-M^{*,\nu}(u_{\nu+1})\right)\Big\lvert\Bigg]\\
\leq (n-1)
\EE\Bigg[\sup_{\substack{u_1,\ldots,u_n\in \NN\\ u_\nu\geq u_{\nu+1}+\delta}}\big\lvert
R^\nu(u_\nu)\big\lvert\Bigg]+(n-1)
\EE\Bigg[\sup_{\substack{u_1,\ldots,u_n\in \NN\\ u_\nu\geq u_{\nu+1}+\delta}}\big\lvert
R^\nu(u_{\nu+1})\big\lvert\Bigg].
\end{multline*}
Following Proposition \ref{prop:biasFinal}, using first the Cauchy-Schwarz inequality and then Doob's submartingale inequality in both terms of the right-hand side of the previous inequality, we obtain the desired result.
\end{proof}

\noindent
Finally, the bias coming from the approximations of the non-decreasing predictable processes can be controlled as stated in the following proposition.
\begin{proposition}\label{prop:biasBV}
The bias from the approximations of the non-decreasing predictable terms can be bounded by
\small
\begin{multline*}
\EE\Bigg[\sup_{\substack{u_1,\ldots,u_n\in \NN\\ u_\nu\geq u_{\nu+1}+\delta}}
\sum_{\nu=1}^{n-1}\Big\lvert A^{\nu}\left(u_{\nu+1}+\delta\right)-\ee{A^{\nu}\left(u_{\nu+1}+\delta\right)}[\mathcal{F}_{u_{\nu+1}}]\\-\left(
A^{*,\nu}\left(u_{\nu+1}+\delta\right)-\ee{A^{*,\nu}\left(u_{\nu+1}+\delta\right)}[\mathcal{F}_{u_{\nu+1}}]
\right)\Big\lvert\Bigg]\leq 
(n-1)\left(\sigma_{M_U^{inner},\delta}+D_y+4\sqrt{\left(4D_y^2+\sigma_{M_U^{inner}}^2\right)T}\right).
\end{multline*}
\end{proposition}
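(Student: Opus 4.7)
The plan is to mirror the strategy of Propositions \ref{prop:biasFinal} and \ref{prop:biasMart}, using the $\delta$-step identity
\[
A^\nu(k+\delta)-\ee{A^\nu(k+\delta)}[\mathcal{F}_k]=M^\nu(k+\delta)-M^\nu(k)+\ee{y^\nu(t_{k+\delta},X_{t_{k+\delta}})}[\mathcal{F}_k]-y^\nu(t_{k+\delta},X_{t_{k+\delta}})
\]
stated earlier in the section, applied at $k=u_{\nu+1}$, together with the analogous identity for the optimal quantities $(A^{*,\nu},M^{*,\nu},y^{*,\nu})$. Subtracting the two identities and introducing the martingale $R^\nu=M^\nu-M^{*,\nu}$ from Lemma \ref{lem:lem1}, the bracketed term inside the supremum decomposes into three contributions: a $\delta$-step martingale increment $R^\nu(u_{\nu+1}+\delta)-R^\nu(u_{\nu+1})$; a $\delta$-step nested-simulation error $\hat{\mathbb{E}}[y^\nu\mid\mathcal{F}_{u_{\nu+1}}]-\ee{y^\nu}[\mathcal{F}_{u_{\nu+1}}]$ stemming from the Monte Carlo construction of the compensator; and a Snell-envelope approximation increment $\ee{y^\nu-y^{*,\nu}}[\mathcal{F}_{u_{\nu+1}}]-(y^\nu-y^{*,\nu})$.

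For the first piece I would write $|R^\nu(u_{\nu+1}+\delta)-R^\nu(u_{\nu+1})|\leq |R^\nu(u_{\nu+1}+\delta)|+|R^\nu(u_{\nu+1})|$, pass through the supremum over $(u_1,\ldots,u_n)$, and apply Cauchy--Schwarz followed by Doob's submartingale inequality to the non-negative submartingale $(R^\nu)^2$, exactly as in the proof of Proposition \ref{prop:biasMart}. Combined with Lemma \ref{lem:lem1}, this bounds each pathwise supremum by $2\sqrt{(4D_y^2+\sigma_{M_U^{inner}}^2)T}$; doubling (because of the two $\delta$-step endpoints) and summing over $\nu=1,\ldots,n-1$ yields the $4(n-1)\sqrt{(4D_y^2+\sigma_{M_U^{inner}}^2)T}$ contribution.

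For the second piece, the defining bound of $\sigma_{M_U^{inner},\delta}$ states that the $\delta$-step ahead nested Monte Carlo error is centered with conditional second moment at most $\sigma_{M_U^{inner},\delta}^2$; Jensen then produces an $L^1$ contribution of $\sigma_{M_U^{inner},\delta}$ per index $\nu$. For the third piece, the uniform bound $|y^\nu-y^{*,\nu}|\leq D_y$ together with conditional Jensen controls the increment by $D_y$. Summing these latter two contributions over $\nu=1,\ldots,n-1$ and combining via the triangle inequality with the martingale contribution delivers the stated bound.

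The main obstacle is the second piece: unlike the 1-step conditional expectations already absorbed into the bound of Lemma \ref{lem:lem1}, the $\delta$-step ahead conditional expectation used to build the compensator $A^\nu$ contributes a genuinely new Monte Carlo error that is not controlled by $\sigma_{M_U^{inner}}$ alone, forcing the separate tracking of $\sigma_{M_U^{inner},\delta}$ throughout. Getting the constants to line up also requires exploiting the conditional independence of the nested simulations from the Snell-envelope approximation error, so that the triangle inequality aggregates the $L^2$ contributions without spawning cross-terms.
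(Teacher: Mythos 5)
Your proposal follows essentially the same route as the paper's proof: it starts from the same $\delta$-increment identity for the compensator, subtracts the optimal version, and splits the result into the martingale difference $R^\nu$ (handled exactly as in Proposition \ref{prop:biasMart} via Cauchy--Schwarz, Doob's inequality and Lemma \ref{lem:lem1}), the $\delta$-step nested Monte Carlo error absorbed into $\sigma_{M_U^{inner},\delta}$, and the Snell-envelope approximation error absorbed into $D_y$. The accounting of the three contributions and the resulting constants match the paper's argument, so no substantive difference to report.
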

\normalsize
\begin{proof}
Again, we consider the approximation of the predictable process for the problem with $\nu$ remaining exercise times
\small
\begin{multline*}
A^{\nu}\left(u_{\nu+1}+\delta\right)-\ee{A^{\nu}\left(u_{\nu+1}+\delta\right)}[\mathcal{F}_{u_{\nu+1}}]-\left(
A^{*,\nu}\left(u_{\nu+1}+\delta\right)-\ee{A^{*,\nu}\left(u_{\nu+1}+\delta\right)}[\mathcal{F}_{u_{\nu+1}}]
\right)\\
=M^\nu\left(u_{\nu+1}+\delta\right)-M^\nu(u_{\nu+1})+\ee{
y^\nu\left(u_{\nu+1}+\delta,X_{u_{\nu+1}+\delta}\right)}[\mathcal{F}_{u_{\nu+1}}]-y^\nu\left(u_{\nu+1}+\delta, X_{u_{\nu+1}+\delta}\right)\\
-\left(
M^{*,\nu}\left(u_{\nu+1}+\delta\right)-M^{*,\nu}(u_{\nu+1})+\ee{
y^{*,\nu}\left(u_{\nu+1}+\delta, X_{u_{\nu+1}+\delta}\right)}[\mathcal{F}_{u_{\nu+1}}]-y^{*,\nu}\left(u_{\nu+1}+\delta,X_{u_{\nu+1}+\delta}\right)
\right).
\end{multline*}
\normalsize
Now, since
\begin{equation*}
\big\lvert M^\nu\left(u_{\nu+1}+\delta\right)-M^{*,\nu}\left(u_{\nu+1}+\delta\right) \big\lvert\leq \big\lvert R^\nu_{u_{\nu+1}+\delta}\big\lvert,
\end{equation*}
\begin{equation*}
\big\lvert M^\nu\left(u_{\nu+1}\right)-M^{*,\nu}\left(u_{\nu+1}\right) \big\lvert\leq \big\lvert R^\nu_{u_{\nu+1}}\big\lvert,
\end{equation*}
and
\begin{equation*}
\big\lvert y^{*,\nu}\left(u_{\nu+1}+\delta,X_{u_{\nu+1}+\delta}\right)- y^{\nu}\left(u_{\nu+1}+\delta,X_{u_{\nu+1}+\delta}\right)\big\lvert \leq D_y,
\end{equation*}
by summing over all exercise opportunities, taking the supremum and then the expectation, we obtain by definition of $\sigma_{M_U^{inner},\delta},$
\small
\begin{multline*}
\EE\bigg[\sup_{\substack{u_1,\ldots,u_n\in \NN\\ u_\nu\geq u_{\nu+1}+\delta}}
\sum_{\nu=1}^{n-1}\Big\lvert A^{\nu}\left(u_{\nu+1}+\delta\right)-\ee{A^{\nu}\left(u_{\nu+1}+\delta\right)}[\mathcal{F}_{u_{\nu+1}}]\\-\left(
A^{*,\nu}\left(u_{\nu+1}+\delta\right)-\ee{A^{*,\nu}\left(u_{\nu+1}+\delta\right)}[\mathcal{F}_{u_{\nu+1}}]
\right)\Big\lvert\bigg]\leq (n-1)\left(\sigma_{M_U^{inner},\delta}+D_y+4\sqrt{\left(4D_y^2+\sigma_{M_U^{inner}}^2\right)T}\right).
\end{multline*}
\end{proof}
\normalsize

\noindent
The proof of Theorem \ref{thm:thm2} is then obtained by summing up all contributions to the total bias from Propositions \ref{prop:biasFinal}, \ref{prop:biasMart} and \ref{prop:biasBV}. We thus obtain an upper bound on the total bias stemming from the errors in all approximations.
We see in particular in the expression of the total bias that the waiting period appears implicitly in the error term from the Monte Carlo $\delta$-steps ahead estimation.

We now illustrate the Q-learning approach with several numerical examples.

\section{Numerical results}\label{sec:results}
As illustrative examples we present 
swing options in the multiple stopping framework in several dimensions, with varying maturities, $n=2$ exercise rights and a waiting period constraint $\delta>0$.

In all examples we select mini-batches of size 1000 using experience replay on a sample of 1,000,000 simulations. We consider ReLU activation functions applied component-wise, perform stochastic gradient descent for the optimization step using the RMSProp implementation from \texttt{PyTorch}, and initialize the network parameters using the default \texttt{PyTorch} implementation.

Swing options appear in the commodity and energy markets (natural gas, electricity) as hedging instruments to protect investors from futures price fluctuations. They give the holder of the option the right to exercise at multiple times during the lifetime of the contract, the number of exercise opportunities being specified at inception. Further constraints can be imposed at each exercise time, such as the maximal quantity of energy that can be bought or sold, or the minimal waiting period between two exercise times, see e.g. \cite{bender11}. In the presence of a volume constraint, under certain sufficient conditions, see \cite{bardou},  the optimal policy is a so-called "bang-bang strategy", see e.g. \cite{daluiso20}, i.e. at each exercise time the optimal strategy is to buy or sell the maximum or the minimum amount allowed, which then simplifies the action space. A model for commodity futures prices is derived in \cite{daluiso20}, implemented using proximal policy optimization (PPO), which is another tool from reinforcement learning and where the policy update is forced to be close to the previous policy by clipping the advantage function. The pricing of such contracts is also investigated in \cite{meinshausen} with no constraints, in \cite{bender11} with waiting time constraint and in \cite{bender15} with both waiting time and volume constraints.
We will consider the same model for the electricity spot prices as in \cite{meinshausen}, that is, the exponential of a Gaussian Ornstein-Uhlenbeck process, which in discrete time takes the form
\begin{equation*}
\log S_{t+1}=\left(1-k\right)\left(\log S_{t}-\mu\right)+\mu+\sigma Z_t,
\end{equation*}
where $\{Z_t\}_{t=0,\ldots,T-1}$ are standard normal random variables, and where we choose $\sigma=0.5,$ $k=0.9$, $\mu=0,$ $S_0=1$ and strike price $K=1.$ We consider the payoff $(S_t-K)^{+}$ for time $t=0,\ldots,T,$ without any discounting, as in \cite{meinshausen}, \cite{bender11} and \cite{bender15}. A discount factor could be taken into account with no real additional complexity. 
In the multi-dimensional setting we will consider the same payoff as max-call options, that is $\left(\max_{i=1,\ldots,d}S_t^i-K\right)^{+}$ for a $d$-dimensional vector of asset prices $\left(S^1,\ldots,S^d\right)^T,$ where we assume for the marginals the same dynamics as above and independence between the respective innovations. We will consider the same starting value $S_0=1$ for all the assets in the examples below.
We stress that this pricing approach can be extended to any other type of Markovian dynamics which are more adequate for capturing electricity prices. 

We assume that the arbitrage-free  price is given by taking the expectation at (\ref{eq:stopMult}) under an appropriate pricing measure, that is, a probability measure under which the (discounted) prices of tradable and storable basic securities in the underlying market are (local) martingales. The electricity market being incomplete, the prices will depend on the choice of the pricing measure. The latter can be selected by considering a calibration on liquidly traded swing options.

We select a deep neural network with 3 hidden layers containing 32 neurons each for the examples with $d=3$ and $d=10,$ and 90 neurons each for the examples with $d=50.$
We present our results in dimensions $d=3,$ $d=10$ and $d=50$ in Table \ref{tab:swing3d} below, using $M_L=100,000$, $M_U=100$ and $J=5000.$
\begin{table}[ht!]
	\caption{Prices at $t=0$ for swing options with varying maturities and asset price dimensions, $K=1,$ $\mu=0,$ $\sigma=0.5,$ and $k=0.9.$}\label{tab:swing3d}
	\begin{center}
		\begin{tabular}{|c|c|c|c|c|}
			\hline
			Model parameters  & $\hat{L}$ & PE & $\hat{U}$ & CI \\ \hline
			$d=3, n=2, \delta=2, T=10$ & 2.7249 & 2.8269 & 2.9288 & [2.7181,\mbox{ }3.0319] \\ \hline
			$d=3, n=2, \delta=2, T=20$  & 3.4934 & 3.8283 & 4.1632 & [3.4864,\mbox{ }4.3362] \\ \hline
			$d=10, n=2, \delta=2, T=10$ & 4.1343 & 4.3312 & 4.5281 & [4.1268,\mbox{ }4.6886]\\ \hline
			$d=10, n=2, \delta=2, T=20$ & 4.9706 & 5.5155 & 6.0604 & [4.9629,\mbox{ }6.1922]\\ \hline
			$d=50, n=2, \delta=2, T=10$ & 6.2141 & 6.6333 & 7.0525 & [6.2058,\mbox{ }7.2704]\\ \hline
			$d=50, n=2, \delta=2, T=20$ & 7.0785 & 7.9207 & 8.7628 & [7.0702,\mbox{ }9.0418]\\ \hline
		\end{tabular}
	\end{center}
\end{table}

\section{Conclusion}\label{sec:conclu}
We have presented optimal stopping problems appearing in the valuation of financial products under the lens of reinforcement learning. This new angle allows us to model the optimal action-value function using the RL machinery and deep neural networks. This method could serve as an alternative to recent approaches developed in the literature, be it to derive the optimal policy by modeling directly the stopping times as in \cite{dos}, or by modeling the continuation values by approximating conditional expectations as in \cite{becker20}. We have also considered the pricing of multiple exercise stopping problems with waiting period constraint and derived lower and upper bounds on the option price, using the trained neural network and the dual representation, respectively. In addition, we have proved a result that controls for the total bias resulting from the approximation of the terms appearing in the dual formulation. The RL framework is suitable for configurations where the action space varies in a non-trivial way with time, i.e. there are certain degrees of freedom for the agent to explore the environment at each time step. This is exemplified through the swing option with multiple stopping rights and waiting time constraint, but could also be useful for more complex environments.
It could also be interesting to investigate state-of-the-art improvements to the DQN algorithm brought forward in \cite{rainbow}. One could explore these avenues in further research. 

\newpage
\section*{Acknowledgements}{We thank Prof. Patrick Cheridito for helpful comments and for carefully reading previous versions of the manuscript.
\\As SCOR Fellow, John Ery thanks SCOR for financial support.
\\Both authors have contributed equally to this work.}

\bibliographystyle{apa}
\bibliography{bibli}

\end{document}